\newtheorem{thm}{Theorem}
\newtheorem{proposition}[thm]{Proposition}
\newenvironment{proof}{\noindent\bf{Proof.}\rm}{\hfill$\blacksquare$\bigskip}
\begin{document}

\title{NP-hardness of hypercube 2-segmentation}

\author{Uriel Feige~\thanks{Department of Computer Science, The Weizmann Institute, Rehovot, Israel. Email:
uriel.feige@weizmann.ac.il.}}

\maketitle

\begin{abstract}
The {\em hypercube 2-segmentation} problem is a certain biclustering problem that was previously claimed to be NP-hard, but for which there does not appear to be a publicly available proof of NP-hardness. This manuscript provides such a proof.
\end{abstract}

\section{Introduction}

We consider the following problem.

{\bf Hypercube 2-segmentation (H2S)}. Given a set of $k$ vectors $x_1, \ldots, x_k$ in $\{0,1\}^d$, one needs to select two centers $c_1, c_2$  in $\{0,1\}^d$ maximizing $$\sum_{i=1}^k \max[agree(c_1,x_i),agree(c_2,x_i)]$$
where $agree(x,y)$ counts on how many coordinates vectors $x$ and $y$ agree (which is $d$ minus the Hamming distance between $x$ and $y$).

H2S may also be phrased in the following equivalent way.

{\bf H2S -- $\ell_1$ maximization formulation.} Given a set of $k$ vectors $x_1, \ldots, x_k$ in $\{1,-1\}^d$, partition the $k$ vectors into two sets, such that the sum of the $\ell_1$ norms of the two corresponding vector sums is maximized.
\medskip

The equivalence between the two formulations of H2S follows from the fact that for a set of vectors on the hypercube, the location of the center that maximizes agreement is determined by taking the majority value on each coordinate separately. The $\ell_1$ norm of the sum shows how much this optimal center gains compared to placing a center at $0^d$ (the center of the $\{1,-1\}^d$ hypercube).

The following theorem was claimed in~\cite{KPR98} without proof.

\begin{thm}
\label{thm:KPR}
The hypercube 2-segmentation problem is NP-hard.
\end{thm}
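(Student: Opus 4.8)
The plan is to work with the $\ell_1$ maximization formulation and first rewrite the objective in a form that exposes a clean optimization problem. Encode a partition of the vectors $x_1,\dots,x_k\in\{1,-1\}^d$ by signs $\sigma\in\{1,-1\}^k$, and for each coordinate $j$ set $T_j=\sum_i x_i[j]$ (a fixed quantity) and $P_j(\sigma)=\sum_i \sigma_i\, x_i[j]$. The two partial sums on coordinate $j$ are $\frac12(T_j+P_j)$ and $\frac12(T_j-P_j)$, so using the identity $|a+b|+|a-b|=2\max(|a|,|b|)$ the total objective becomes
\begin{equation}
\sum_{j=1}^d \max\bigl(|T_j|,\,|P_j(\sigma)|\bigr).
\end{equation}
Since $\sum_j |T_j|$ is a constant, maximizing the objective amounts to choosing $\sigma$ to push each $|P_j|$ above $|T_j|$; in particular, on instances whose coordinates are balanced ($T_j=0$ for all $j$) the problem reduces exactly to maximizing $\sum_j|P_j(\sigma)|=\|X^\top\sigma\|_1$, where $X$ is the $k\times d$ sign matrix.

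First I would reduce an arbitrary instance to a balanced one. Given $x_1,\dots,x_k$, form the instance consisting of these together with their negations $-x_1,\dots,-x_k$; now every column sum is $0$. Writing $c_i=\sigma_i-\sigma_{k+i}\in\{-2,0,2\}$ for the combined sign of the pair $(x_i,-x_i)$, the objective is $\sum_j\bigl|\sum_i c_i x_i[j]\bigr|$, a convex function of $c$ over the box $[-2,2]^k$, so its maximum is attained at a vertex, i.e.\ with each pair split ($c_i=\pm2$). Hence the optimum of the doubled instance equals $2\max_{\sigma\in\{1,-1\}^k}\|X^\top\sigma\|_1$, and since $\|X^\top\sigma\|_1=\max_{\tau\in\{1,-1\}^d}\sigma^\top X\tau$, it equals $2\max_{\sigma,\tau}\sigma^\top X\tau$. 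Thus H2S is at least as hard as computing the $\pm1$ bilinear optimum (closely related to the cut norm) of a sign matrix.

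It then remains to show that computing $\max_{\sigma,\tau\in\{1,-1\}}\sigma^\top X\tau$ for a sign matrix $X$ is NP-hard, which I would do by reduction from \textsc{Max-Cut}. For a graph on $n$ vertices with adjacency matrix $A$, the matrix $M=nI-A$ is positive semidefinite, and for PSD matrices the inequality $(\sigma-\tau)^\top M(\sigma-\tau)\ge 0$ forces $\max_{\sigma,\tau}\sigma^\top M\tau=\max_\sigma\sigma^\top M\sigma$; a short computation turns the latter into $n^2-2m+4\cdot\mathrm{MaxCut}$, an affine function of the maximum cut value. The one gap is that $M$ is not a sign matrix. I would close it by representing each integer entry of $M$ as a block of $\pm1$ entries obtained by duplicating rows and columns, while adding a gadget of heavily weighted consistency coordinates that forces all copies of a given row (resp.\ column) to receive the same sign at any optimum; each duplicated block then contributes exactly the intended integer weight to the bilinear form.

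The main obstacle I anticipate is this last step: realizing the PSD matrix $nI-A$ exactly with a sign matrix while keeping the reduction tight. Specifically, I must verify that the consistency gadget is strong enough that every optimal $(\sigma,\tau)$ is block-constant (so the simulated weights are the intended ones) yet cheap enough not to swamp the signal, that the optimal $\tau$ still aligns with $\sigma$ after the $\pm1$ substitution, and that all additive constants---from the gadget, from the PSD shift $nI$, and from the negation-doubling---can be tracked precisely enough to convert a threshold on the H2S objective into a threshold on the cut. Once this bookkeeping is in place, a polynomial-time algorithm for H2S would decide \textsc{Max-Cut}, establishing Theorem~\ref{thm:KPR}.
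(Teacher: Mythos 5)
Your first three steps are correct, and they are genuinely different from (and in some ways slicker than) the paper's route. The identity $|a+b|+|a-b|=2\max(|a|,|b|)$ rewriting the objective as $\sum_j\max(|T_j|,|P_j(\sigma)|)$, the negation-doubling plus the convexity-at-a-vertex argument showing the doubled instance has optimum exactly $2\max_{\sigma,\tau\in\{\pm1\}}\sigma^\top X\tau$, and the PSD identity $\max_{\sigma,\tau}\sigma^\top(nI-A)\tau=\max_\sigma\sigma^\top(nI-A)\sigma=n^2-2m+4\,\mathrm{MaxCut}$ are all sound. (The paper instead reduces from Max-Cut directly to H2S with a block construction, and handles split clusters by a fractional-cut/local-search argument rather than by forcing a bilinear form.)

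The last step, however, is a genuine gap, and it is exactly where the paper's one nontrivial idea lives. The matrix $nI-A$ has entries $n$, $-1$ and $0$, and the zero entries (non-adjacent pairs, which are the bulk of the matrix) are the real problem: if you blow each entry up into a $p\times p$ block of $\pm1$ entries, a block meant to carry weight $0$ cannot be made to contribute nothing, because for \emph{every} $\pm1$ matrix $B$ of size $p\times p$ one has $\max_{\sigma,\tau}\sigma^\top B\tau=\Omega(p^{3/2})$ (take $\tau$ uniformly random, so each row sum has expected absolute value $\Theta(\sqrt p)$, then set $\sigma_r=\mathrm{sign}((B\tau)_r)$). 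So the best achievable is $\Theta(p^{3/2})$ noise per zero block, and you must exhibit a filling attaining this bound together with a scale-separation argument showing the total noise $O(n^2p^{3/2})$ is beaten by the signal gap. That is precisely Proposition~\ref{pro:hadamard} of the paper --- Hadamard codewords are such a filling --- and your proposal contains no substitute for it. Moreover, the specific fix you sketch, ``heavily weighted consistency coordinates,'' cannot exist as stated: a sign matrix has no weights, so a heavy gadget column must itself be simulated by many duplicated columns, each of which needs $\pm1$ entries on every row of every other group, and controlling the contribution of those entries is the same neutralization problem all over again. (Forcing block-constancy is also not actually necessary: since a bilinear form over a product of boxes is maximized at a vertex --- or, as the paper argues, since a fractional cut is never better than the best integer cut --- split blocks can be absorbed into the analysis. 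The unavoidable missing ingredient is the negligible $\pm1$ filling, not the consistency gadget.) As it stands, your argument reduces Max-Cut to H2S only modulo this step, so it is not yet a proof.
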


In this manuscript we provide a proof of this theorem.

\subsection{Related work}

Kleinberg, Papadimitriou and Raghavan~\cite{KPR98} undertook a systematic study of the complexity of segmentation problems. Item~(4) of Theorem~2.1 in~\cite{KPR98} claims that H2S is NP-hard. The proof sketch of that theorem states that the proof is by reduction ``from {\em maximum satisfiability} with clauses that are equations modulo two", and gives no further details.
That paper also shows that given any set of vectors, at least one of these vectors is a nearly optimal center for the set: its agreement score with the set of vectors is at least a~$2\sqrt{2} - 2 \simeq 0.828$ fraction of that of the optimal center. This easily implies a polynomial time algorithm for approximating H2C within a ratio~$0.828$.

Alon and Sudakov~\cite{AS99} provide a PTAS for H2S. Specifically, for every choice of $\epsilon > 0$ they provide a linear time algorithm (with leading constant that depends on $\epsilon$) that approximates H2S with a factor no worse than $1 - \epsilon$. Similar results apply to hypergraph $k$-segmentation for constant $k > 2$.

In~\cite{KPR04}, the journal version of~\cite{KPR98}, Item (3) of Theorem~1.1 claims that H2S is MAXSNP-complete, and cites~\cite{KPR98} as reference, without providing a proof. This claim of MAXSNP-completeness contradicts the fact (proved in~\cite{AS99}) that H2S has a PTAS, and hence is incorrect. (The authors of~\cite{KPR04} do cite~\cite{AS99}.)

Wulff, Urner and Ben-David~\cite{WUB13} study a problem that they refer to as {\em monochromatic biclustering} (MCBC). They present a PTAS for (the maximization version of) MCBC, and
also prove NP-hardness of MCBC in the case that the input instance may contain {\em don't care} symbols. This NP-hardness result is based on a reduction from {\em max-cut}. In~\cite{WUB13}, it is conjectured that NP-hardness holds even without {\em don't care} symbols.
H2C is a special case of MCBC without {\em don't care} symbols, and hence Theorem~\ref{thm:KPR} implies the conjecture of~\cite{WUB13}. (Apparently, the authors of~\cite{WUB13} were unaware of the previous work on H2C cited above. The term {\em biclustering} does not appear in~\cite{KPR04}, whereas the term {\em segmentation} does not appear in~\cite{WUB13}.)

\section{Proof of NP-hardness}

We start with some background. A Hadamard code $H_M$ of dimension $M$ is a collection of $M$ vectors in $\{1, -1\}^M$ with the property that every two vectors are orthogonal. There are well known recursive constructions of Hadamard codes when $M$ is a power of~2, and hence we shall assume $M$ to be a power of~2.

Recall the notions of $\ell_1$ and $\ell_2$ norms of a vector. We shall use the following proposition.

\begin{proposition}
\label{pro:hadamard}
Consider an arbitrary set of distinct vectors from an arbitrary Hadamard code $H_M$. Then the $\ell_1$ norm of their sum is at most $M^{3/2}$.
\end{proposition}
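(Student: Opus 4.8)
The plan is to route through the $\ell_2$ norm, since orthogonality of the Hadamard vectors makes the $\ell_2$ norm of a sum trivial to compute exactly, and then to convert back to the $\ell_1$ norm by a standard Cauchy--Schwarz inequality. Let $S \subseteq H_M$ be the chosen set of distinct vectors, write $t = |S| \le M$, and set $v = \sum_{u \in S} u$. The target bound $M^{3/2}$ is exactly what this two-step approach yields, which is a good sign that it is the intended route.

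First I would compute $\|v\|_2^2$ by expanding the sum. By bilinearity, $\|v\|_2^2 = \langle v, v\rangle = \sum_{u,w \in S} \langle u, w\rangle$. The diagonal terms contribute $\langle u, u\rangle = M$ each, since every coordinate of $u \in \{1,-1\}^M$ squares to $1$; the off-diagonal terms vanish because distinct Hadamard vectors are orthogonal. Hence $\|v\|_2^2 = tM$, so $\|v\|_2 = \sqrt{tM}$.

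Next I would invoke the inequality $\|v\|_1 \le \sqrt{M}\,\|v\|_2$, valid for any vector in $\mathbb{R}^M$ (apply Cauchy--Schwarz to the coordinatewise product of $(|v_j|)_j$ with the all-ones vector). Combining the two displays gives $\|v\|_1 \le \sqrt{M}\cdot\sqrt{tM} = M\sqrt{t}$. Since a Hadamard code of dimension $M$ contains only $M$ vectors, we have $t \le M$, and therefore $\|v\|_1 \le M\sqrt{M} = M^{3/2}$, as claimed.

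There is no real obstacle here; the argument is short once the $\ell_2$-then-$\ell_1$ strategy is chosen. The only points that deserve a moment's care are recording that each $\langle u,u\rangle$ equals $M$ (rather than $1$) because the entries are $\pm 1$, and observing that the bound $t \le M$ is exactly what is needed to close the estimate at $M^{3/2}$. It is worth noting in passing that this bound is essentially tight: taking all $M$ vectors of a suitable Hadamard code already produces an $\ell_1$ norm of order $M^{3/2}$, so the proposition cannot be improved beyond constant factors by this method.
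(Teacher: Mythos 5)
Your proof is correct and follows exactly the paper's own route: compute $\|v\|_2 = \sqrt{tM}$ via orthogonality, apply $\|v\|_1 \le \sqrt{M}\,\|v\|_2$, and use $t \le M$ to conclude. You have simply filled in the Cauchy--Schwarz details that the paper states in one line.
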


\begin{proof}
The $\ell_2$ norm of a code word is $\sqrt{M}$. As the codewords are orthogonal, the $\ell_2$ norm of the sum of $q$ distinct vectors is $\sqrt{qM}$. The $\ell_1$ norm can exceed the $\ell_2$ norm by a factor of at most $\sqrt{M}$. As $q \le M$, the proof follows.
\end{proof}

We now prove Theorem~\ref{thm:KPR}.

\begin{proof}
The proof is by reduction from max cut, and uses for H2S the $\ell_1$ maximization formulation.

Consider a graph $G(V,E)$ with $n$ vertices and $m$ edges that serves as an input instance for max cut. Orient the edges of $G$ arbitrarily. Our reduction uses an integer parameter $M$ (setting $M$ to be $O(n^2m^2)$ will suffice). We reduce the oriented $G$ into an instance of H2S with $k = Mn$ vectors of dimension $d = Mm$ as follows.

The coordinates of vectors are partitioned into $m$ blocks of $M$ coordinates. Each block corresponds to one edge of $G$. Every vertex $v_i$ of $G$ gives rise to $M$ vectors $v_{i,1}, \ldots , v_{i,M}$. In each of these vectors, in every block $B_e$ that corresponds to edge $e$:

\begin{enumerate}

\item If $v_i$ is the head of $e$ then all entries of $B_e$ are~$+1$.

\item If $v_i$ is the tail of $e$ then all entries of $B_e$ are~$-1$.

\item If $v_i$ is not incident with $e$, then the entries of $B_e$ in $v_{i,j}$ (for $1 \le j \le M$) are the $j$th codeword of the Hadamard code $H_M$.

\end{enumerate}

{\bf Yes} instances.
Let $(V_1,V_2)$ be the optimal cut for $G$, and suppose that it cuts $c$ edges (necessarily $c > \frac{m}{2}$). Consider the solution to the H2S instance that partitions the vectors into two clusters $X_1$ and $X_2$ in agreement with the partition $(V_1,V_2)$.

The value of the solution can be lower bounded as follows. There are $Mn$ vectors and $m$ blocks, each of size $M$. Consider a block that corresponds to an edge $e$ that is cut. In each of $X_1$ and $X_2$ there is one endpoint of the edge, and this vertex has a monochromatic block that contributes $M^2$ to the $\ell_1$ norm. This might be partially offset by the other blocks. But the block of each vertex not incident with $e$ can offset at most $M^{3/2}$ of the $\ell_1$ norm, by Proposition~\ref{pro:hadamard}. As there are $c$ edges in the cut, the value of the solution is at least $c(2M^2 - (n-2)M^{3/2})$. (The value is in fact higher because blocks corresponding to edges not in the cut also contribute to the $\ell_1$ norm, but we ignore this further tightening of the parameters.)

{\bf No} instances.
Suppose now that no cut of $G$ cuts $c-1$ edges. Consider an arbitrary partition of the vectors of the H2S instance into two parts $X_1$ and $X_2$. This partition corresponds to a fractional partition of $G$, where the extent $x_i$ to which a vector $v_i$ is in $V_1$ is equal to the fraction of its vectors that are in $X_1$. Similarly, $1 - x_i$ is the extent to which $v_i$ is in $V_2$. For an edge $e = (v_i,v_j)$,  the extent to which it is cut is $y_e = |x_i - x_j|$ (which of course equals $|(1 - x_i) - (1 - x_j)|$).

For an arbitrary edge $e$, consider the contribution of the blocks associated with it to the $\ell_1$ norm. The combination of two monochromatic blocks that are associated with its end points contribute $M^2 y_e$ to the $\ell_1$ norm of each of $X_1$ and $X_2$. The Hadamard blocks associated with vertices that are not end points of $e$ each contributes at most $\sqrt{2}M^{3/2}$ towards the sum of $\ell_1$ norms of $X_1$ and $X_2$. (There is a multiplier of $\sqrt{2}$ rather than just $1$ because a vertex may be split among both sides of the cut. Proposition~\ref{pro:hadamard} allows one to upper bound the effect of this split by $\sqrt{2}$.) Summing up over all edges and all blocks, the value of any solution is at most $2M^2\sum_e y_e + \sqrt{2}(n-2)mM^{3/2}$.

To bound $\sum_e y_e$, observe that local search can always change a fractional cut into an integer cut which is at least as large. Hence $\sum_e y_e \le c-1$.

{\bf Summary.} Subtracting the upper bound for {\em no} instances from the lower bound for {\em yes} instances, it follows that the {\em yes} instance leads to higher value than the {\em no} instance if $2M^2 > (\sqrt{2}m+c)(n-2)M^{3/2}$. Taking $M > 2m^2n^2$ suffices.
\end{proof}

{\bf Remark:} The value of $M$ in the proof of Theorem~\ref{thm:KPR} can be reduced to $O(n^2)$ by using the fact that max-cut is APX-hard. By the results of~\cite{hastad01}, for {\em no} instances we may assume that there is no cut with $0.942 c$ edges.

\subsection*{Acknowledgements}

The question of NP-hardness of the hypergraph 2-segmentation problem was communicated to me by Shai Ben-David and Ruth Urner at the Dagstuhl seminar 14372, {\em Analysis of Algorithms Beyond the Worst Case}, September 2014, where the work reported here was also performed.

Work supported in part by the Israel Science Foundation
(grant No. 621/12) and by the I-CORE Program of the Planning and Budgeting Committee and the Israel Science Foundation (grant No. 4/11).


\begin{thebibliography}{22}

\bibitem{AS99} Noga Alon, Benny Sudakov: On Two Segmentation Problems. J. Algorithms 33(1): 173--184 (1999).

\bibitem{hastad01} Johan Hastad: Some optimal inapproximability results. J. ACM 48(4): 798--859 (2001).

\bibitem{KPR98} Jon M. Kleinberg, Christos H. Papadimitriou, Prabhakar Raghavan: Segmentation Problems. STOC 1998: 473--482.

\bibitem{KPR04} Jon M. Kleinberg, Christos H. Papadimitriou, Prabhakar Raghavan: Segmentation problems. J. ACM 51(2): 263--280 (2004).

\bibitem{WUB13} Sharon Wulff, Ruth Urner, Shai Ben-David: Monochromatic Bi-Clustering. ICML (2) 2013: 145--153.

\end{thebibliography}
\end{document}